\newcommand{\keywords}[1]{\par\addvspace\baselineskip
\noindent\keywordname\enspace\ignorespaces#1}
\newcommand{\defname} [1] {\noindent{\bf (#1) }\ }
\begin{document}

\mainmatter  %

\title{The Unified Segment Tree and its Application 
to the Rectangle Intersection Problem}

\titlerunning{The Unified Segment Tree}

\author{David P. Wagner \thanks{ \mailsa} }
\authorrunning{The Unified Segment Tree} %
\institute{ Hanyang University\\
Department of Electronics Engineering\\
Seoul, Korea\\
\url{www.hanyang.ac.kr/english}}

\toctitle{The Unified Segment Tree} 
\tocauthor{David P. Wagner}
\maketitle

\begin{abstract}
In this paper we introduce a variation on the multidimensional segment tree, 
formed by unifying different interpretations of the dimensionalities of the
data structure.  We give some new definitions to previously 
well-defined concepts that arise naturally in this variation, and 
we show some properties concerning the relationships between 
the nodes, and the regions those nodes represent.  We think these properties 
will enable the data to be utilized in new situations, beyond 
those previously studied.  As an example, we show that the data 
structure can be used to solve the Rectangle Intersection Problem 
in a more straightforward and natural way than had be done in the past.
\keywords{segment tree, multidimensional, rectangle intersection problem, 
quad tree}
\end{abstract}

\section{Introduction}

The Segment Tree is a classic data structure from computational
geometry which was introduced by Bentley in 1977 \cite{Bentley77}.
It is used to store a set of line segments, and it can be queried
at a point so as to efficiently return a list of all line segments
which contain the query point. 

The data structure has numerous applications.  For example, in its
early days it was used to efficiently list all pairs of intersecting 
rectangles from a list of rectangles in the plane \cite{BentleyW80},
to report the list of all rectilinear line segments in the plane which
intersect a query line segment \cite{VaishnaviW82}, and to report the
perimeter of a set of rectangles in the plane \cite{VitanyiW79}.
More recently the segment tree has become popular for use in pattern 
recognition and image processing \cite{RacherlaRO02}.

Vaishnavi described one of the first higher dimensional segment trees
in 1982 \cite{Vaishnavi82}.  Introducing his self-described 
``segment tree of of segment trees'', he attached an ``inner segment tree'',
representing one dimension, to every node of an ``outer segment tree'', 
representing the other dimension, and used this for the purpose 
of storing rectangles in the plane.  A point query would then return
a list of all rectangles containing the point.  The recursive nature of this
data structure meant that it could be generalized to arbitrary dimensions.
This has been the standard model for high dimensional segment trees ever since.

Here we describe a variation on the higher dimensional segment tree
which we feel represents a more useful way to store the data.  The new data 
structure is formed by merging the segment trees which could be formed 
by different choices for the dimensions of the inner and outer segment trees.
Our purpose in introducing this variation is not to show that it
is faster for a particular application, but to show that the data 
structure has some interesting properties, and therefore the data 
can be used in some new situations. 

In the following sections, we will define the data structure, and show 
a useful way to visualize it.  We introduce several new definitions as they 
apply to this variation of data structure.  We further show some 
interesting properties concerning the relationships between the nodes, 
and the regions those nodes represent.

Finally, we demonstrate that the data structure can be used to
solve the previously studied ``Rectangle Intersection Problem''.
Existing methods to solve this problem have either involved
using range trees, storing $d$-dimensional rectangles as $2d$-dimensional 
points, or sweep planes, processing a lower dimensional problem across
the sweep.  We think that our new data structure represents a more 
natural way to store this data, and the algorithms involved are more 
straightforward.

\section{Definitions and Properties}

Here we introduce a number of definitions, as they apply to Unified
Segment tree, including a definition of the data structure itself.

A description of the original segment tree is included in the 
Appendix, for the convenience of the reader.  Please note that many 
properties of segment trees are given therein, and these properties 
are referred to throughout this paper.

\subsection{The Unified Segment Tree}

We begin with a description of a two dimensional segment tree, 
given by Vaishnavi in 1982.  He describes this data structure, 
which stores rectangles in the plane, as a ``segment tree of segment trees'' 
\cite{Vaishnavi82}.  It begins with a single 
one-dimensional segment tree, which is called the ``outer segment tree'',
and which represents divisions of the plane along one of the two
axes.  Attached to every node of this outer segment tree, 
is an ``inner segment tree'', which represents further divisions 
along the other axis.

Here we note that the choice of axis for the outer segment tree
could have been either the x-axis or the y-axis.
Although this choice may be arbitrary, it has a great effect 
on the organization of the data structure.
Therefore, let us give different names to the different
data structures resulting from this choice.

\begin{definition}
\defname{$xy$-segment tree}
Define the $xy$-segment tree to be the two dimensional segment
tree whose outer segment tree divides the plane along the $x$-axis,
and whose inner segment trees further divide these regions along
the $y$-axis.
\end{definition}

\begin{definition}
\defname{$yx$-segment tree}
Define the $yx$-segment tree to be the two dimensional segment
tree whose outer segment tree divides the plane along the $y$-axis,
and whose inner segment trees further divide these regions along
the $x$-axis.
\end{definition}

The leaves created in a segment tree by the insertion of a segment
correspond to the sections of a canonical subdivision of the segment.
One leaf is created for each component of the subdivision.
Here we show that the same leaves are created regardless of whether 
an empty $xy$-segment tree or an empty $yx$-segment tree is used.

\begin{theorem}
The same rectangle inserted into an empty $xy$-segment tree and into 
an empty $yx$-segment tree will create an equivalent set of leaves in 
the two trees.
\end{theorem}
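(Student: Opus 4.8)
The plan is to reduce this two-dimensional statement to the one-dimensional fact, recalled in the Appendix, that the leaves produced by inserting a segment $[a,b]$ into a one-dimensional segment tree form a canonical subdivision of $[a,b]$ that depends only on the tree's underlying skeleton of elementary intervals --- equivalently, only on the set of coordinate values over which the tree was built. Write the inserted rectangle as $R=[x_1,x_2]\times[y_1,y_2]$. Because both trees begin empty, the only $x$-coordinates that ever enter either structure are $x_1$ and $x_2$, and the only $y$-coordinates are $y_1$ and $y_2$; hence every $x$-segment tree that appears (the outer tree of the $xy$-segment tree, and every inner tree of the $yx$-segment tree) is built over one and the same $x$-skeleton, and likewise along the $y$-axis. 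Let $D_x$ be the canonical subdivision of $[x_1,x_2]$ determined by that common $x$-skeleton and $D_y$ the canonical subdivision of $[y_1,y_2]$ determined by the common $y$-skeleton; each is a fixed finite set of intervals.

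Next I would trace the insertion of $R$ into the $xy$-segment tree. Inserting $[x_1,x_2]$ into the outer tree selects exactly the nodes whose intervals are the members of $D_x$; at each such node we then insert $[y_1,y_2]$ into its inner $y$-tree, and since all of these inner trees share the same $y$-skeleton, each such insertion creates one leaf for every $J\in D_y$, independently of which outer node we are in. Thus the leaves of the $xy$-segment tree are indexed by the pairs $D_x\times D_y$, the pair $(I,J)$ corresponding to the leaf whose region is the rectangle $I\times J$. Repeating the argument verbatim with the axes interchanged shows that the leaves of the $yx$-segment tree are indexed by $D_y\times D_x$, the pair $(J,I)$ corresponding to the leaf whose region is $J\times I$.

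Finally I would note that $J\times I$ and $I\times J$ denote the same subset of the plane, so both index sets enumerate precisely the same family of regions, $\{\,I\times J:I\in D_x,\ J\in D_y\,\}$; since leaves are regarded as equivalent when they represent the same region, the two trees receive equivalent sets of leaves (indeed the same number, $|D_x|\cdot|D_y|$, of them). The step I expect to carry the real weight is the claim underlying both traces: that the canonical subdivision of $[x_1,x_2]$ is literally the same object whether $x$ is the outer axis (as in the $xy$-tree) or an inner axis (as in the $yx$-tree), and symmetrically for $y$. This rests on the one-dimensional invariance quoted above together with the observation that an initially empty tree forces every skeleton along a given axis to coincide; checking that the segment-tree convention for the unbounded end intervals is applied consistently in the two trees is routine but should be stated explicitly.
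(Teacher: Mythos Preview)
Your argument is correct and follows essentially the same idea as the paper's proof: both observe that subdividing the rectangle first along $x$ and then along $y$ produces the same family of subrectangles as subdividing first along $y$ and then along $x$, namely the products $I\times J$ with $I\in D_x$, $J\in D_y$. Your version is considerably more explicit than the paper's (which simply asserts that the order of subdivision does not matter); one small notational slip is that $J\times I$ and $I\times J$ are not literally the same Cartesian product, but your intended meaning---that the leaf's planar region has $x$-extent $I$ and $y$-extent $J$ in either case---is clear and correct.
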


\begin{proof}
A rectangle inserted into an $xy$-segment tree is first
divided along the $x$-axis, and then each subregion is 
further subdivided along the $y$-axis.
In the $yx$-segment tree, the rectangle is first divided
along the $y$-axis, and then each subregion is further
subdivided along the $x$-axis.
The same subregions are created from the rectangle, regardless 
of the order in which the two axes are chosen, so therefore
an equivalent set of leaves is created in the two trees.
\end{proof}

If several rectangles are stored in a segment tree,
each is stored independent of the other rectangles.
So this leads to the following corollary.

\begin{corollary}
The same set rectangles inserted into an $xy$-segment tree and into 
a $yx$-segment tree will create an equivalent set of leaves in 
the two trees.
\end{corollary}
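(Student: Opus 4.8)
The plan is to reduce the corollary directly to the preceding Theorem, using the independence of insertions noted just before the statement. First I would fix an arbitrary finite set of rectangles $\{R_1,\dots,R_n\}$ and let $L^{xy}_i$ (respectively $L^{yx}_i$) denote the set of leaves created when $R_i$ alone is inserted into an empty $xy$-segment tree (respectively an empty $yx$-segment tree). The Theorem gives, for each $i$, that $L^{xy}_i$ and $L^{yx}_i$ are equivalent, since the two sets arise from inserting the same rectangle.

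Next I would show that the leaves of the tree after inserting the whole collection form the union $\bigcup_{i=1}^{n} L^{xy}_i$ in the $xy$-segment tree, and $\bigcup_{i=1}^{n} L^{yx}_i$ in the $yx$-segment tree. This is exactly where the remark ``each is stored independent of the other rectangles'' does the work: the insertion of $R_i$ produces precisely the leaves of the canonical subdivision of $R_i$, and these are unaffected by which other rectangles are present, since two rectangles whose subdivisions call for the same leaf simply share it rather than creating a second copy. Hence the leaf set of the final tree depends only on the union of the per-rectangle leaf sets and not on the insertion order or on any interaction between distinct rectangles.

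Finally I would combine the two facts by transitivity: since $L^{xy}_i$ is equivalent to $L^{yx}_i$ for every $i$, the unions $\bigcup_i L^{xy}_i$ and $\bigcup_i L^{yx}_i$ are equivalent, because equivalence of leaves is a statement made region by region and is therefore preserved under taking unions of leaf sets.

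The main obstacle — really the only point requiring care — is making precise the claim that the leaves of the combined structure decompose as the union of the single-rectangle leaf sets. The informal phrase ``each rectangle is stored independent of the others'' must be promoted to the assertion that the presence of the other rectangles neither suppresses nor adds any leaf relative to the single-rectangle case; granting this (it follows from the standard insertion procedure recalled in the Appendix, where a leaf is created on demand and thereafter reused), the remainder is the routine transitivity argument above.
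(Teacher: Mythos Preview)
Your proposal is correct and follows essentially the same approach as the paper: the paper does not give a separate proof at all, but simply records the one-line remark that ``each [rectangle] is stored independent of the other rectangles,'' and declares the corollary immediate from the preceding Theorem. Your write-up is a faithful formal elaboration of exactly that remark, with the per-rectangle leaf sets and their union making explicit what the paper leaves implicit.
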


Now we can define the unified tree in two dimensions based on these 
two structures.

\begin{definition}
\defname{Unified Segment Tree -- 2 dimensions}
Define the unified segment tree storing a set of rectangles in the plane
to be the data structure created by the following procedure:
\begin{enumerate}
\item{Create both the $xy$-segment tree and the $yx$-segment tree containing
the set of rectangles.}
\item{Merge the root of every inner segment tree with the node of the
outer segment tree to which it is attached, so that they are considered
to be one node.}
\item{Merge any two nodes in the $xy$-segment tree and the $yx$-segment tree
which represent the same region of the plane, so that they are considered
to be one node.}
\item{Add all possible ancestors to any node which is missing any
of its available ancestors.
(Ancestors in this data structure are defined later.)}
\end{enumerate}
\end{definition}

We note several features of the new data structure which are not
normally associated with segment trees.

\begin{itemize}
\item{A node may have a two parents, one from the $xy$-segment tree and
one from the $yx$-segment tree}
\item{A node may have four children.  These could have been created when
the root of an inner segment tree was merged with a node of the outer
segment tree, or there could be two nodes each from the $xy$-segment tree
and from the $yx$-segment tree.}
\item{The new data structure is technically no longer a tree, as it
may contain cycles.}
\end{itemize}

\subsection{Parents, Children, Ancestors, Descendants}

In order to accommodate the features of the new data structure,
we must create some new definitions of previously well-defined 
concepts such as parent and child.

\begin{definition}
\defname{$x$-child}
Define an $x$-child of a node to be either of the two nodes which represent
the regions created when the original node's representative region is divided
in half along the $x$-axis.
\end{definition}

Note, there is both an left $x$-child, and right $x$-child.

\begin{definition}
\defname{$x$-parent}
Define the $x$-parent of a node to be the node for whom the original node
is an $x$-child.
\end{definition}

\begin{definition}
\defname{$x$-ancestor}
Define an $x$-ancestor to be any node which can be reached by following
a series of $x$-parent relationships.
\end{definition}

\begin{definition}
\defname{$x$-descendant}
Define an $x$-descendant to be any node which can be reached by following
a series of $x$-child relationships.
\end{definition}

Analogous definitions exist for $y$-child, $y$-parent, $y$-ancestor, and
$y$-descendant.
In addition to $x$-ancestors and $x$-descendants, we define additional
nodes to be simply ancestors and descendants.

\begin{definition}
\defname{Ancestor}
Define an ancestor to be any node which can be reached by following
a series of $x$-parent and/or $y$-parent relationships.
\end{definition}

\begin{definition}
\defname{Descendant}
Define an descendant to be any node which can be reached by following
a series of $x$-child and/or $y$-child relationships.
\end{definition}

Note that all nodes can have an $x$-parent, a $y$-parent, two $x$-children,
and two $y$-children, except for extremal nodes.
Therefore, the node structure must include additional pointers to
accommodate for this.  As with the original segment tree, each node 
keeps a list of the segments whose canonical representation includes
the node.

Some additional properties can be seen from these definitions

\begin{itemize}
\item{The $x$-parent of the $y$-parent of a node is the same node as
the $y$-parent of its $x$-parent.}
\item{The $x$-child of the $y$-child of a node may be the same node as
the $y$-child of its $x$-child, but only if the corresponding choices 
for left and right children are made.}
\end{itemize}

\section{Visualization}

We find it useful to visualize the unified segment tree as a diamond, where
the root of the data structure is at the top of the diamond.  
We divide the diamond into units, such that all nodes representing
a rectangle of the same shape are located in the same unit.

The two $x$-children of any node appear together in the same unit below 
and to the left of their $x$-parent. The two $y$-children of any
node appear together in the same unit below and to the right of their
$y$-parent.  Thus, each horizontal row of the diamond 
has double the number of nodes per unit, as the row above it.
See Figure \ref{fig:segtreevisualize}.

\begin{figure}
\centering
\includegraphics[height=6.2cm]{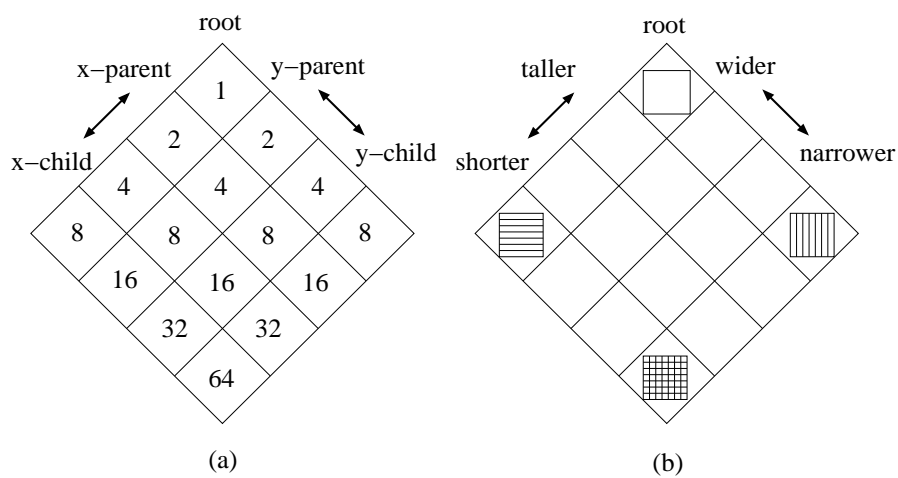}
\caption{
(a) A diamond-shaped visualization of parent child relationships, and the 
number of nodes in each unit of the diamond.  (b) The rectangles which 
are represented by the nodes in each unit of the diamond.
}
\label{fig:segtreevisualize}
\end{figure}

It is possible to see the $xy$-segment tree and the $yx$-segment tree
embedded in the diamond representation of the unified segment tree.
See Figure \ref{fig:embeddedtrees1}.

\begin{figure}
\centering
\includegraphics[height=4.2cm]{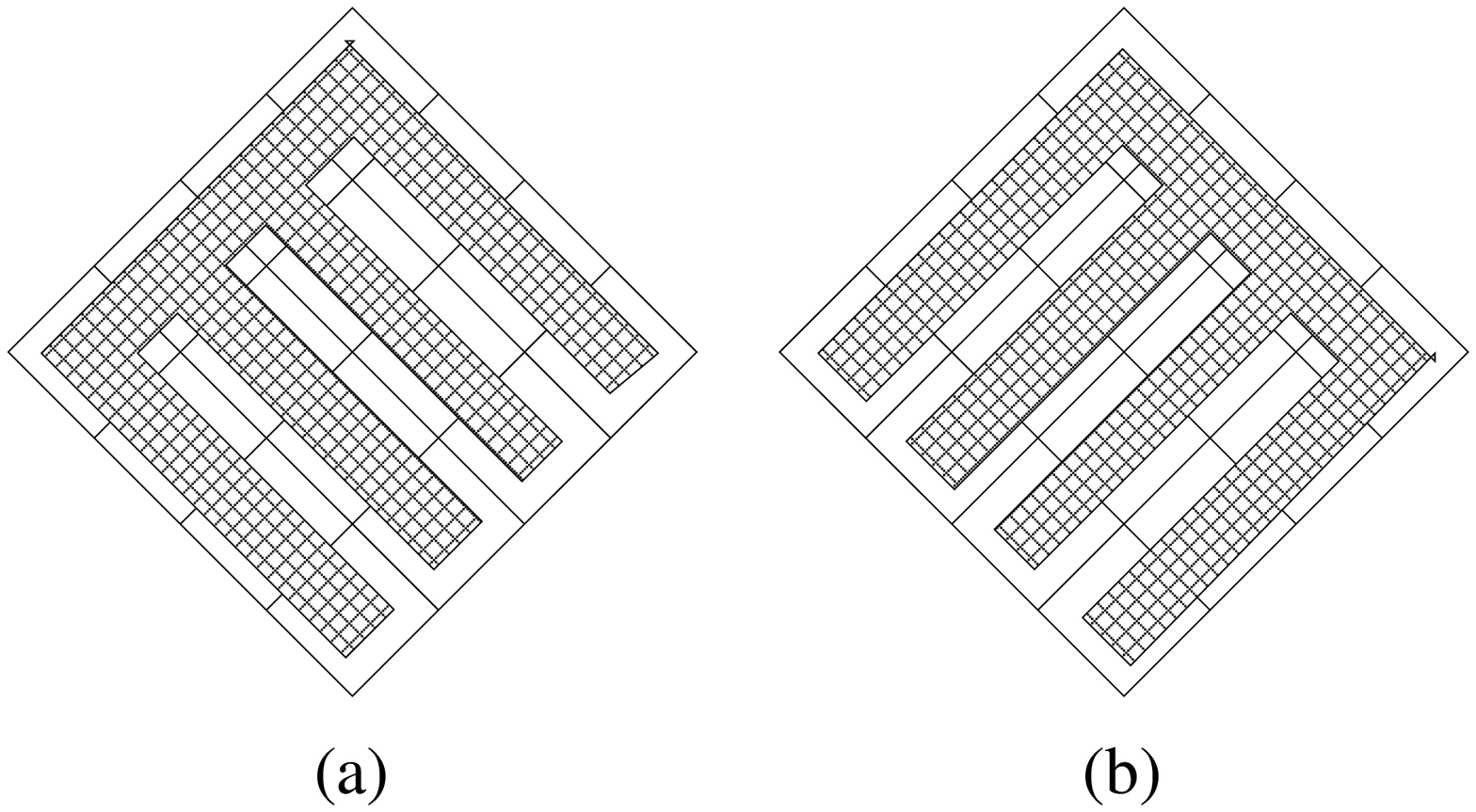}
\caption{
(a) The $xy$-segment tree embedded into the unified segment tree.
(b) The $yx$-segment tree embedded into the unified segment tree.
}
\label{fig:embeddedtrees1}
\end{figure}

Using this visualization, the ancestors of a node appear in the 
diamond shaped region above the node.  The descendants appear in the 
diamond shaped region below the node.  See Figure \ref{fig:ancestors}.
The number of ancestors can be bounded as follows.

\begin{theorem}
A node has at most one ancestor per unit of the diamond.
\end{theorem}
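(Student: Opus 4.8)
The plan is to exploit the first of the two additional properties noted above, namely that applying the $x$-parent operation and then the $y$-parent operation to a node yields the same node as applying the $y$-parent operation and then the $x$-parent operation. First I would record what a unit of the diamond is in these terms: a node lies in the unit indexed by the pair $(a,b)$ exactly when its representative rectangle is obtained from the root rectangle by $a$ halvings along the $x$-axis and $b$ halvings along the $y$-axis, which is precisely the ``same shape'' grouping used in the visualization. Every node has a unique $x$-parent and a unique $y$-parent (these are inherited from the underlying one-dimensional segment trees); taking the $x$-parent moves a node from unit $(a,b)$ to unit $(a-1,b)$, and taking the $y$-parent moves it from unit $(a,b)$ to unit $(a,b-1)$.

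Now fix a node $N$ in unit $(a,b)$ and let $A$ be any ancestor of $N$ lying in some unit $(a',b')$; since ancestors occur only in the diamond region above $N$, we have $a' \le a$ and $b' \le b$. By the definition of ancestor, $A$ is reached from $N$ by some sequence of $x$-parent and $y$-parent steps. Each $x$-parent step decrements the first index and fixes the second, and each $y$-parent step does the reverse, so this sequence must consist of exactly $a - a'$ $x$-parent steps and $b - b'$ $y$-parent steps, in some order. Using the commutation property repeatedly — each swap of an adjacent $y$-parent step followed by an $x$-parent step leaves the resulting node unchanged — I would rewrite this sequence into the canonical order in which all $a-a'$ $x$-parent steps are performed first and then all $b-b'$ $y$-parent steps. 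Because at each step the $x$-parent, respectively the $y$-parent, is uniquely determined, the node obtained from $N$ by this canonical sequence is unique; hence $A$ equals that node, and there is at most one ancestor of $N$ in unit $(a',b')$. Since $(a',b')$ was arbitrary, the theorem follows.

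The one point requiring care is that the intermediate nodes appearing during the reordering actually exist, since an extremal node may lack an $x$-parent or a $y$-parent. Here the original sequence from $N$ to $A$ is, by hypothesis, a valid chain of parent relations, so every node along it exists; and each use of the commutation identity is between nodes for which both sides are defined, so every swap produces another valid chain with the same endpoint. Thus the normalization never leaves the data structure, and this is really the only delicate step — the rest is bookkeeping with the two index coordinates.
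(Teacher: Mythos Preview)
Your argument is correct, but it takes a genuinely different route from the paper's. The paper argues geometrically: two ancestors in the same unit represent rectangles of the same shape; they share a point because they have a common descendant; and by the segment-tree no-overlap property (Property~\ref{prop:nooverlap}) two same-shape rectangles sharing a point must coincide. You instead argue combinatorially, using the commutation of the $x$-parent and $y$-parent operations to normalize any ancestor path into a canonical one and then invoke uniqueness of each parent. Both are short and sound; the paper's proof leans on the geometric containment structure already established for segment trees, while yours isolates exactly the algebraic fact (the parent operations commute and are single-valued) that forces uniqueness, and extends verbatim to $d$ dimensions without reinvoking the overlap property. The only place your write-up is a little thin is the existence check during reordering: it is worth saying explicitly that the $y$-parent of an intermediate node $M$ exists because $M$ has the same $y$-index as the node at which the original chain eventually applies a $y$-parent, and symmetrically for $x$; this is the concrete reason ``both sides are defined'' at each swap.
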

\begin{proof}
Assume there are two ancestors of a node within the same unit.
These two nodes must represent rectangles of the same shape, 
because they are within the same unit.
The representative rectangles must contain a common point,
since the nodes have a common descendant.
The representative rectangles must not partially overlap,
by Property \ref{prop:nooverlap} of Segment Trees (see Appendix).
Therefore, the rectangles, and the nodes representing them, must be the same.
\end{proof}

\begin{figure}
\centering
\includegraphics[height=6.2cm]{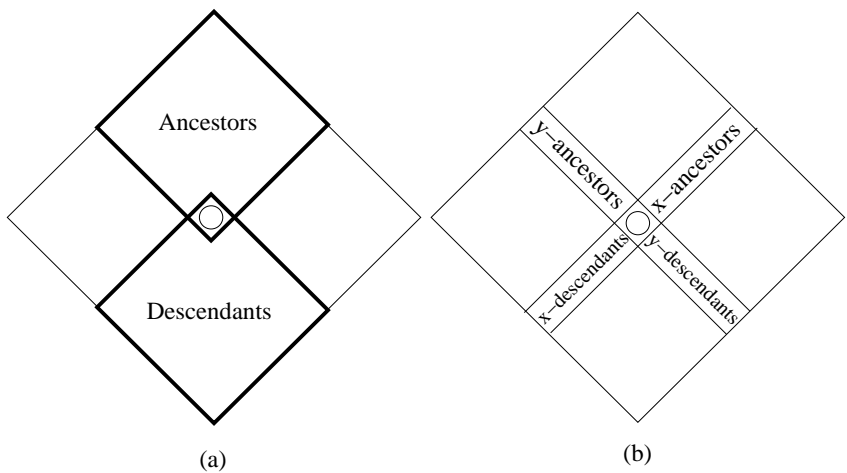}
\caption{
(a) The location of the ancestors and descendants of a node within the 
diamond.  (b) The location of x-ancestors, y-ancestors, x-descendants, 
and y-descendants within the diamond.
}
\label{fig:ancestors}
\end{figure}

We think it is interesting that the nodes taken from the central
vertical column of the diamond form a quad tree.
See Figure \ref{fig:embeddedtrees2} for a depiction of an embedding
this, as well of a k-D tree.

\begin{figure}
\centering
\includegraphics[height=5.1cm]{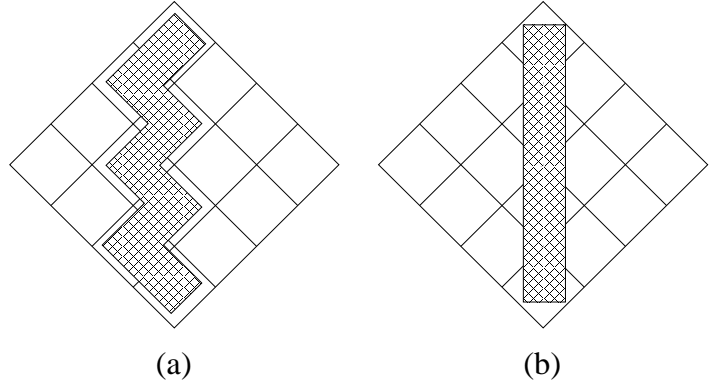}
\caption{
(a) A k-D tree embedded in the unified segment tree.
(b) A quad tree embedded in the unified segment tree.
}
\label{fig:embeddedtrees2}
\end{figure}

\section{Relationships between Nodes}

Here we look at some interesting relationships between the
nodes of a unified segment tree

\begin{theorem}
A node is an descendant of another node, if and only if it
represents a rectangle that falls entirely within that represented
by the ancestor node.
\label{thm:rectinside}
\end{theorem}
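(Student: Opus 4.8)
The plan is to prove both directions of the biconditional, relying on the structure imposed by the $x$-child and $y$-child relationships and on the visualization already established. Throughout, I would use the fact that the representative region of an $x$-child (resp.\ $y$-child) is exactly one of the two halves obtained by bisecting the parent's region along the $x$-axis (resp.\ $y$-axis), so that in every single parent--child step the child's rectangle is contained in the parent's rectangle.

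For the forward direction, suppose node $v$ is a descendant of node $u$. By Definition of descendant, there is a finite chain $u = w_0, w_1, \dots, w_k = v$ in which each $w_{i+1}$ is an $x$-child or a $y$-child of $w_i$. In each step the representative rectangle of $w_{i+1}$ is half of the representative rectangle of $w_i$ along one axis, hence is contained in it. Chaining these containments gives that the rectangle of $v$ lies entirely within the rectangle of $u$, which is the claim.

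For the reverse direction, suppose the rectangle $R_v$ of $v$ falls entirely within the rectangle $R_u$ of $u$. I would argue by descending induction on the size of $R_u$ (equivalently, by induction on the number of units separating $u$ from the top of the diamond, working downward), producing an explicit chain from $u$ to $v$. If $R_u = R_v$ then $u = v$ by the no-partial-overlap property of segment trees (Property~\ref{prop:nooverlap}), so we may assume $R_u \neq R_v$; then $R_v$ is strictly contained in $R_u$. Since $R_v \subsetneq R_u$, at least one of the two coordinate extents of $R_v$ is strictly shorter than the corresponding extent of $R_u$; say the $x$-extent is. The two $x$-children of $u$ partition $R_u$ into a left and a right half along the $x$-axis, and because $R_v$ is a subrectangle of $R_u$ that is an actual node of the unified segment tree, its $x$-extent cannot straddle the midpoint (again by the no-partial-overlap property, applied to the $x$-extents), so $R_v$ lies entirely inside one of the two $x$-children's rectangles, call it $u'$. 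Now $R_v \subseteq R_{u'} \subsetneq R_u$, and $u'$ exists as a node (the ``add all possible ancestors'' clause of the construction guarantees the chain of intermediate nodes is present), so by the induction hypothesis $v$ is a descendant of $u'$, hence of $u$.

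The main obstacle is the reverse direction, and specifically justifying the claim that a node's rectangle contained in $R_u$ must lie wholly on one side of the relevant bisecting line rather than straddling it: this is where the segment-tree no-partial-overlap property does the real work, and care is needed to apply it to the one-dimensional projection onto the axis being split while the other coordinate is held within $R_u$. A secondary point requiring attention is ensuring that the intermediate node $u'$ actually exists in the data structure; this is handled by step~4 of the construction of the unified segment tree, but it should be invoked explicitly so the induction is well founded.
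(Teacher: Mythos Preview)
Your forward direction matches the paper's. For the reverse direction, the paper takes the dual route: rather than descending from $u$ toward $v$ one child at a time, it ascends from $v$ by following $x$-parents until the $x$-extent equals that of $R_u$ (Property~\ref{prop:nooverlap} then forces the $x$-coordinates to coincide), and then follows $y$-parents until the $y$-extent matches, arriving at $u$. This is a direct construction with no induction, and every intermediate node exists immediately by step~4 of the definition, since each is an $x$-parent or $y$-parent of a node already in hand.

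Your descending induction is also sound, but be careful with the appeal to step~4 for the existence of $u'$: that step adds \emph{ancestors}, not children, so to invoke it you must exhibit $u'$ as an ancestor of some node already present. The natural witness is $v$, and showing that $u'$ is an ancestor of $v$ is precisely the paper's bottom-up observation applied to the pair $(v,u')$. So your argument is correct, but the bottom-up idea is implicitly doing the essential work inside it. What your top-down framing buys is a picture closer to a search procedure; what the paper's bottom-up version buys is brevity and the avoidance of induction.
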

\begin{proof}
First, consider two nodes in the segment tree, one of which is a
descendant of the other.  The rectangle represented by the descendant was 
formed by successively subdividing the rectangle represented by the ancestor.
Therefore, the rectangle represented by a descendant falls entirely
within the rectangle represented by any of its ancestors.

Next, consider two rectangles represented by two different nodes
of the tree, such that one rectangle falls entirely within the other.
Follow the $x$-parents of the node representing the smaller rectangle,
until a node is found which has the same size in the $x$ direction
as the larger rectangle.  This must have the same $x$-coordinates as the
larger rectangle, otherwise Property \ref{prop:nooverlap} would be violated.
From there, follow $y$-parents until a node is found which has the same
size in the $y$ direction.  This node must have the same $y$-coordinates
as the larger rectangle, for the same reason.  Therefore it must be
the node which represents the larger rectangle, and the node representing
the smaller rectangle must be a descendant of the node representing the
larger rectangle.
\end{proof}

\begin{theorem}
An $x$-ancestor of a node is a $y$-ancestor of another node, if and only
if the representative rectangles of the two descendants completely cross 
over each other, one in the $x$ direction, and the other in the $y$ direction.
\label{thm:rectcross}
\end{theorem}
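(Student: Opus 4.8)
The plan is to treat the two implications separately, in both cases relying on the basic segment-tree fact (see the Appendix) that all $x$-projections of node rectangles are dyadic intervals taken from a single one-dimensional segment tree, so that any two of them are either nested or disjoint, never partially overlapping (Property~\ref{prop:nooverlap}), and likewise for $y$-projections. I would fix notation by writing the rectangle of a node $V$ as $R_V = I_V \times J_V$, with $I_V$ its $x$-projection and $J_V$ its $y$-projection, and record the elementary observation that an $x$-parent (resp.\ $x$-child) step replaces $I_V$ by its dyadic parent (resp.\ one of its dyadic children) and leaves $J_V$ untouched, while $y$-parent/$y$-child steps do the reverse. Hence along the $x$-ancestor chain of a node the $y$-projection is constant and the $x$-projections range over exactly the dyadic super-intervals of the starting one, and symmetrically for $y$-ancestors.

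For the forward direction, I would suppose a node $M$ is an $x$-ancestor of $A$ and a $y$-ancestor of $B$. Descending from $M$ to $A$ along $x$-children fixes the $y$-projection, so $J_A = J_M$ and $I_A \subseteq I_M$; descending from $M$ to $B$ along $y$-children gives $I_B = I_M$ and $J_B \subseteq J_M$. Then $I_B = I_M \supseteq I_A$ and $J_A = J_M \supseteq J_B$, which is precisely the statement that $R_B$ spans completely across $R_A$ in the $x$ direction while $R_A$ spans completely across $R_B$ in the $y$ direction. This direction is essentially just bookkeeping.

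For the converse I would assume the two descendant rectangles cross completely, one in $x$ and the other in $y$; after possibly swapping the names $A,B$ and the roles of the two axes, I may take $I_A \subseteq I_B$ and $J_B \subseteq J_A$. I would then let $N$ be the node whose rectangle is $I_B \times J_A$ and show it is simultaneously an $x$-ancestor of $A$ and a $y$-ancestor of $B$: since $I_A \subseteq I_B$ are nested dyadic intervals and the $y$-projection is held at $J_A$, the region $I_B \times J_A$ is one of the nodes encountered when following $x$-parents out of $A$ (present because the construction adds all possible ancestors); symmetrically, since $J_B \subseteq J_A$ are nested dyadic intervals and the $x$-projection is held at $I_B$, that same region appears as a $y$-ancestor when following $y$-parents out of $B$. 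I would close by noting that when the crossings are strict we have $R_A \neq R_B$, so $A$ and $B$ are genuinely distinct, matching the phrase ``another node''.

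The routine part is the per-step bookkeeping of which projection changes. The step I expect to need genuine care is the converse's middle claim: that the cell $I_B \times J_A$ actually occurs as a node of the unified segment tree, and that the node reached from $A$'s side is literally the same node as the one reached from $B$'s side. I anticipate leaning here on (i) the nested-or-disjoint structure of the dyadic projections, (ii) step~4 of the construction, which guarantees the full ancestor chains are present, and (iii) the fact that after merging (step~3 of the construction, cf.\ Theorem~\ref{thm:rectinside}) a node is determined by the region it represents, so that the two routes, landing on the same rectangle, must land on the same node.
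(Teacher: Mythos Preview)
Your proposal is correct and follows essentially the same route as the paper: both directions hinge on the observation that the common node's rectangle is obtained by expanding one descendant in $x$ and the other in $y$, i.e., it is precisely $I_B \times J_A$ in your notation. Your version is in fact more careful than the paper's, since you explicitly invoke step~3 (merging by region) and step~4 (adding all ancestors) of the construction to justify that the node $I_B \times J_A$ exists and is the same node whether approached from $A$ via $x$-parents or from $B$ via $y$-parents; the paper's proof leaves these points implicit.
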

\begin{proof}
Consider two nodes, such that an $x$-ancestor of one is a $y$-ancestor of
the other.  The rectangle represented by the common ancestor of the two 
nodes can be formed by expanding one of the original two rectangles in the
$x$ direction, or by expanding the other in the $y$ direction.  Therefore,
the rectangle of the ancestor must be completely spanned in $y$ direction
by the first rectangle, and completely spanned in the $x$ direction by the
other.  Therefore the two rectangles must completely cross each other.

Next, consider two nodes which represent rectangles that completely
cross over each other.  The smallest rectangle enclosing both original
rectangles can be found by expanding one rectangle in the $x$-direction
or by expanding the other rectangle in the $y$-direction.
Therefore, the enclosing rectangle is represented by an $x$-ancestor 
of one of the original nodes, and a $y$-ancestor of the other node.
\end{proof}

\begin{theorem}
Two rectangles which are represented by nodes in a unified segment
tree may only intersect in one of two ways.  Either one rectangle
can be completely inside of the other, or the two rectangles can
completely cross over each other, one in the $x$ direction, and
the other in the $y$ direction.
\label{thm:rectint}
\end{theorem}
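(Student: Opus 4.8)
The plan is to reduce the statement to a purely one-dimensional fact about the intervals produced by a segment tree, namely Property \ref{prop:nooverlap}: two intervals appearing as nodes of a one-dimensional segment tree either contain one another or are disjoint — they never partially overlap. A rectangle represented by a node of the unified segment tree is a product $I_x \times I_y$, where $I_x$ is an interval arising in the $x$-direction segment tree and $I_y$ is an interval arising in the $y$-direction segment tree. So given two such rectangles $R = I_x \times I_y$ and $R' = I'_x \times I'_y$, I would first apply Property \ref{prop:nooverlap} separately in each coordinate: $I_x$ and $I'_x$ are either nested or disjoint, and likewise $I_y$ and $I'_y$.

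Next I would do the case analysis. If $R$ and $R'$ do not intersect at all, there is nothing to prove, so assume they intersect; then neither coordinate pair can be disjoint, so in each coordinate one interval is contained in the other. This leaves four cases according to the direction of containment in $x$ and in $y$: (i) $I_x \subseteq I'_x$ and $I_y \subseteq I'_y$, which gives $R \subseteq R'$; (ii) $I'_x \subseteq I_x$ and $I'_y \subseteq I_y$, which gives $R' \subseteq R$; (iii) $I_x \subseteq I'_x$ and $I'_y \subseteq I_y$; and (iv) the symmetric situation $I'_x \subseteq I_x$ and $I_y \subseteq I'_y$. In case (iii), $R$ spans $R'$ completely in the $y$-direction (since $I'_y \subseteq I_y$) while $R'$ spans $R$ completely in the $x$-direction, which is exactly the statement that the two rectangles completely cross, one in $x$ and the other in $y$; case (iv) is the same with the roles of $R$ and $R'$ swapped. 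Cases (i) and (ii) are the ``one inside the other'' alternative, and (iii)–(iv) are the ``completely cross'' alternative, so the theorem follows.

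I do not expect a serious obstacle here; the only point requiring a little care is justifying that the $x$-extent and $y$-extent of a unified-segment-tree node really are, respectively, intervals of a one-dimensional segment tree so that Property \ref{prop:nooverlap} applies. This is essentially immediate from the construction of the unified segment tree out of the $xy$- and $yx$-segment trees together with the Corollary, and from the way $x$-children and $y$-children subdivide only one coordinate at a time; alternatively, one can phrase cases (i)/(ii) and (iii)/(iv) directly in terms of Theorems \ref{thm:rectinside} and \ref{thm:rectcross}, observing that nesting of the rectangle in both coordinates is exactly the descendant relation while mixed nesting is exactly the $x$-ancestor/$y$-ancestor relation. Either route makes the argument short.
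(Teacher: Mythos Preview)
Your proposal is correct and follows essentially the same approach as the paper: both reduce the classification of intersections to Property~\ref{prop:nooverlap} applied coordinate-wise. The paper's proof is terser---it simply says ``enumerate the possible rectangle intersections'' and defers to figures---whereas you actually carry out that enumeration by factoring each rectangle as $I_x \times I_y$ and doing the four-case analysis explicitly; this is a more rigorous write-up of the same idea rather than a different route.
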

\begin{proof}
By enumerating the possible rectangle intersections, we
can see that all other intersections would violate 
Property \ref{prop:nooverlap}.  See Figures \ref{fig:rectint}
and \ref{fig:impossibleint} for a visual depiction of 
the possible and impossible intersections.
\end{proof}

\begin{figure}
\centering
\includegraphics[height=2.8cm]{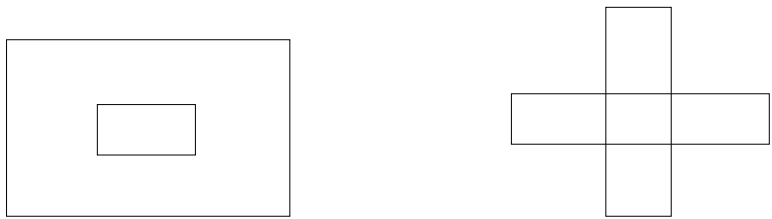}
\caption{
Possible intersections between rectangles represented
by nodes in a unified segment tree.
}
\label{fig:rectint}
\end{figure}

\begin{figure}
\centering
\includegraphics[height=2.8cm]{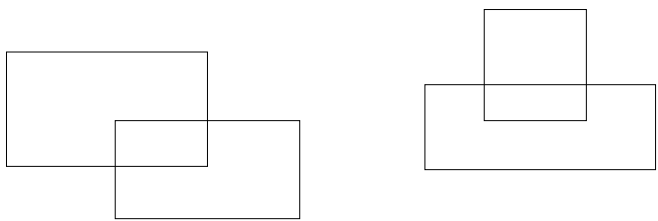}
\caption{
Impossible intersections between rectangles represented
by nodes in a unified segment tree.
}
\label{fig:impossibleint}
\end{figure}

\section{Analysis}

Here we analyze the performance of the unified segment tree.
We show bounds on the size of the data structure after
$n$ rectangles have been inserted.
Also we show bounds on the running time of the standard
segment tree operations.

\begin{theorem}
After the insertion of $n$ rectangles into a unified
segment tree, the deepest $x$-descendant of the root 
and the deepest $y$-descendant of the root have a maximum 
depth of $O(\log n)$.
\end{theorem}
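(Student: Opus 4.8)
The plan is to reduce this to the classical height bound for an ordinary one‑dimensional segment tree. First I would observe that following $x$-child pointers from the root never causes any subdivision along the $y$-axis: by definition an $x$-child splits its parent's representative rectangle in half along the $x$-axis only, so every $x$-descendant of the root represents a vertical strip whose $y$-extent is still the entire plane. Consequently the $x$-descendants of the root, together with the $x$-child relation, are exactly the nodes of an ordinary (one‑dimensional) segment tree over the $x$-coordinates of the input — this is precisely the outer tree of the embedded $xy$-segment tree depicted in Figure~\ref{fig:embeddedtrees1}(a). The deepest $x$-descendant of the root is therefore the deepest leaf of that one‑dimensional structure.

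Next I would bound its height. Each of the $n$ inserted rectangles contributes two $x$-coordinates (its left and right edges), so there are at most $2n$ distinct abscissae, hence only $O(n)$ elementary intervals and $O(n)$ internal nodes in the one‑dimensional segment tree on the $x$-axis. By the standard height bound for segment trees recalled in the Appendix, a one‑dimensional segment tree over $m$ endpoints is balanced and has height $O(\log m)$; taking $m \le 2n$ yields that the deepest $x$-descendant of the root sits at depth $O(\log n)$.

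Finally, the corresponding statement for $y$-descendants follows by the symmetric argument with the roles of the two axes exchanged — equivalently, by applying the same reasoning to the embedded $yx$-segment tree of Figure~\ref{fig:embeddedtrees1}(b). Combining the two bounds completes the argument.

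The step I expect to require the most care is the first one: making it rigorous that the chain of $x$-children genuinely coincides with a balanced segment‑tree skeleton over the $x$-axis rather than with some unbalanced incremental refinement. In particular I would want to confirm that the ``divide in half'' in the definition of $x$-child refers to a fixed balanced partition of the coordinate range, so that the $2n$ endpoints are separated after $O(\log n)$ halvings, and that the merging and ancestor‑completion steps in the construction of the unified segment tree neither lengthen nor shorten any pure $x$-chain (they only identify nodes representing equal regions and add missing ancestors, which are already present on the relevant chain). Once that identification is pinned down, the rest is a direct appeal to the classical one‑dimensional analysis.
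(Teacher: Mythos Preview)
Your proposal is correct and follows essentially the same approach as the paper: identify the $x$-descendants (resp.\ $y$-descendants) of the root with the outer segment tree of the embedded $xy$-segment tree (resp.\ $yx$-segment tree) and then invoke the one-dimensional height bound from the Appendix. The paper's version is terser, omitting your endpoint-counting and the careful discussion of why the merging/ancestor-completion steps do not disturb the pure $x$-chain, but the argument is the same.
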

\begin{proof}
Recall that the unified segment tree was created from
an $xy$-segment tree, and a $yx$-segment tree.
The outer segment trees of these two trees exactly comprise 
the $x$-descendants and the $y$-descendants of the root.
By Property \ref{prop:height}, these two trees can
have a maximum height of $O(\log n)$.
\end{proof}

\begin{corollary}
Any node in a unified segment tree can have a maximum 
of $O(\log n)$ $x$-ancestors and a maximum of 
$O(\log n)$ $y$-ancestors.
\end{corollary}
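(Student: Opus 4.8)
The plan is to reduce the two‑dimensional statement to the ordinary one‑dimensional case, by observing that the $x$-ancestors of any fixed node all lie inside a single classical segment tree whose height is already controlled by Property \ref{prop:height} and by the preceding theorem.

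First I would fix an arbitrary node $v$ and examine the chain of nodes obtained by repeatedly following $x$-parent pointers out of $v$. Each such step leaves the $y$-extent of the representative rectangle unchanged while enlarging only its $x$-extent, so every node in this chain represents a rectangle with exactly the same $y$-projection as $v$; moreover the chain is finite and strictly increasing in $x$-extent, ending at a node with full $x$-extent.

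Next I would locate the structure that contains this chain. In the embedded $yx$-segment tree (Figure \ref{fig:embeddedtrees1}(b)), the outer node whose interval equals the $y$-extent of $v$ carries an inner segment tree that subdivides along the $x$-axis; I claim $v$ and all of its $x$-ancestors belong to this one inner tree, and that the $x$-parent relation restricted to it coincides with the ordinary parent relation of that inner segment tree. Granting this, the number of $x$-ancestors of $v$ is at most the height of that inner $x$-segment tree, which stores the $x$-projections of at most $n$ rectangles and hence has height $O(\log n)$ by Property \ref{prop:height} — indeed it is one of the one‑dimensional segment trees whose depth was bounded in the theorem just proved. The bound on $y$-ancestors follows by the symmetric argument using the embedded $xy$-segment tree.

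The main obstacle I anticipate is not the counting but making precise the claim that every $x$-ancestor of $v$ really does occur inside that single inner $x$-tree, with none missing. This is exactly what step~4 of the definition of the unified segment tree secures: any node that ought to appear as an $x$-ancestor but was not produced by an insertion is inserted explicitly, so the slice of the data structure at a fixed $y$-extent is a complete one‑dimensional segment tree rather than a fragment of one. Once that point is settled, the $O(\log n)$ bound is immediate from Property \ref{prop:height}.
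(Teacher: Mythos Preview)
Your argument is correct and rests on the same idea the paper has in mind: the chain of $x$-parents of any node traces a root-to-leaf path in a single one-dimensional segment tree on the at most $2n$ $x$-endpoints, whose height is $O(\log n)$. The paper states this as an immediate corollary with no proof; the intended one-line derivation is slightly more direct than yours: since taking an $x$-parent changes only the $x$-extent, the $x$-extents appearing along the chain are exactly a path in the outer $x$-tree (the $x$-descendants of the root), and that tree's depth was just bounded in the preceding theorem. You instead route the chain through an \emph{inner} $x$-tree of the $yx$-segment tree at the fixed $y$-level of $v$, which also works, but your clause ``indeed it is one of the one-dimensional segment trees whose depth was bounded in the theorem just proved'' is not quite accurate---that theorem bounds the two outer trees, not the inner ones---so at that step you should simply invoke Property~\ref{prop:height} directly. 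Your attention to step~4 of the definition is well placed, though once one observes that the count depends only on the $x$-extent it becomes unnecessary.
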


\begin{theorem}
Any node in a two dimensional unified segment tree can have 
a maximum of $O(log^2 n)$ ancestors.
\end{theorem}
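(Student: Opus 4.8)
The plan is to exploit the commutativity of the $x$-parent and $y$-parent operations, together with the bound of $O(\log n)$ on the number of $x$-ancestors and $y$-ancestors supplied by the preceding corollary.

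First I would show that every ancestor of a node $v$ can be reached by a \emph{canonical} sequence of parent steps in which all $x$-parent steps are performed before all $y$-parent steps. By definition, an ancestor is reached by \emph{some} sequence of $x$-parent and $y$-parent steps. Using the stated property that the $x$-parent of the $y$-parent of a node is the same node as the $y$-parent of its $x$-parent, any adjacent pair in such a sequence consisting of a $y$-parent step immediately followed by an $x$-parent step may be swapped without changing the resulting node. Repeatedly applying such swaps (a bubble-sort argument on the sequence) pushes every $x$-parent step ahead of every $y$-parent step, producing the canonical form. One should note that at extremal nodes a particular parent pointer may be absent; this only removes certain sequences from consideration and therefore cannot increase the count of reachable ancestors.

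Next I would count the canonical sequences. The nodes reachable from $v$ using $x$-parent steps only are exactly the $x$-ancestors of $v$ together with $v$ itself, and by the corollary there are at most $O(\log n)$ of them. From each such node $w$, the nodes reachable using $y$-parent steps only are the $y$-ancestors of $w$ together with $w$ itself, again at most $O(\log n)$ by the corollary. Since every ancestor of $v$ is obtained as one of these intermediate nodes $w$ followed by one of its $y$-ancestors, the number of ancestors of $v$ is at most $O(\log n)\cdot O(\log n)=O(\log^2 n)$.

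The main obstacle is the first step: confirming that the commutation relation genuinely normalises an arbitrary sequence of parent steps, with the attendant bookkeeping for extremal nodes where a parent is missing. Once the canonical form is secured, the counting is immediate. An alternative route gives the same bound directly from the diamond visualization: all ancestors of $v$ lie in the sub-diamond above $v$, which spans $O(\log n)$ units horizontally and $O(\log n)$ units vertically, and by the earlier theorem each unit of the diamond contains at most one ancestor of $v$, so there are at most $O(\log^2 n)$ of them.
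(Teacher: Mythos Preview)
Your primary argument is correct and is essentially the same as the paper's: the paper also reaches every ancestor by first following $x$-parents and then $y$-parents, and multiplies the two $O(\log n)$ bounds. Your commutation/bubble-sort normalization simply makes explicit the step the paper asserts in one sentence (``All ancestors can be reached along one of these routes''), and your alternative diamond-count argument is a valid second route the paper does not use.
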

\begin{proof}
A node can have a maximum of $O(\log n)$ $x$-ancestors, and each 
of these nodes can have a maximum of $O(\log n)$ $y$-ancestors.
All ancestors can be reached along one of these routes.
\end{proof}

\begin{theorem}
The canonical representation of a rectangle in a two dimensional
segment tree is comprised of a maximum of $O(\log^2 n)$
subrectangles.
\end{theorem}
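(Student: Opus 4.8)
The plan is to reduce the statement, by way of the Corollary above, to the familiar ``segment tree of segment trees'': since the canonical representation of a rectangle is precisely the set of leaves created by inserting it, and since that set of leaves is the same in the unified segment tree as in the corresponding $xy$-segment tree, it suffices to count the leaves produced when the rectangle is inserted into the $xy$-segment tree. This reduction lets me sidestep the complications peculiar to the unified structure --- the multiple parents, the cycles, the diamond layout --- and argue instead with an ordinary two-level segment tree.

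I would then apply the one-dimensional theory twice. Inserting the rectangle's $x$-span into the outer segment tree: by Property \ref{prop:height} the outer tree has height $O(\log n)$ after $n$ rectangles have been inserted, and the standard fact that a one-dimensional segment tree of height $h$ covers any segment by $O(h)$ canonical nodes then bounds the number of allocated outer nodes by $O(\log n)$. For each such outer node I insert the rectangle's $y$-span into the attached inner segment tree, which again has height $O(\log n)$ and so contributes at most $O(\log n)$ leaves. Since every leaf of the canonical representation lives in exactly one of these inner trees, multiplying the two bounds yields $O(\log n)\cdot O(\log n)=O(\log^2 n)$.

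The step needing the most care is the reduction itself: I must be sure that ``canonical representation'' in the unified tree matches leaf-for-leaf the decomposition in the $xy$-segment tree (this is exactly what the Corollary delivers), and that the bound is expressed in terms of $n$, the total number of inserted rectangles, since it is $n$ that controls the $O(\log n)$ heights through Property \ref{prop:height} --- a single rectangle on its own carries no such bound. Once that is pinned down, the remainder is just the product of two routine one-dimensional estimates, so I do not anticipate any real obstacle.
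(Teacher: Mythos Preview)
Your proposal is correct and follows essentially the same approach as the paper: decompose the rectangle along the $x$-axis into $O(\log n)$ pieces, then decompose each piece along the $y$-axis into $O(\log n)$ subpieces, and multiply. The paper's proof is terser---it cites Property~\ref{prop:maxcanonical} directly for each axis and does not bother with the reduction to the $xy$-segment tree (the theorem is stated for ``a two dimensional segment tree'' generically, so the Corollary is not actually needed)---but the substance of the argument is identical.
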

\begin{proof}
Any rectangle is decomposed into a maximum of $O(\log n)$
regions in the $x$ direction by Property \ref{prop:maxcanonical}
of Segment Trees (see Appendix).
Each of these regions is further subdivided into
a maximum of $O(\log n)$ subregions in the $y$ direction.
\end{proof}

\begin{theorem}
All nodes representing the canonical subregions of a rectangle 
have a maximum of $O(\log^2 n)$ ancestors in a two dimensional
unified segment tree.
\label{thm:maxancestors}
\end{theorem}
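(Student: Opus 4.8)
The plan is to avoid the naive bound. Combining the two preceding theorems gives only that the $O(\log^2 n)$ canonical subrectangles of a rectangle $R$ have, between them, at most $O(\log^2 n)\cdot O(\log^2 n)=O(\log^4 n)$ ancestors, and the whole content of the statement is that this is a gross overcount: the ancestor sets of the various canonical subregions overlap heavily. The way I would make this precise is to count ancestors separately in the $x$ direction and in the $y$ direction and then multiply, so that each factor contributes only $O(\log n)$.

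First I would recall the structure of the canonical subregions, exactly as in the proof of the previous theorem: they are the products $I_i\times J_j$, where $I_1,\dots,I_p$ (with $p=O(\log n)$ by Property~\ref{prop:maxcanonical}) is the canonical decomposition of the $x$-range of $R$ among the $x$-descendants of the root, and $J_1,\dots,J_q$ (with $q=O(\log n)$) is the canonical decomposition of the $y$-range of $R$ among the $y$-descendants of the root; that both decompositions live in these one-dimensional hierarchies regardless of whether one views the structure as an $xy$- or a $yx$-segment tree follows from the Corollary. Next I would observe that every node of a unified segment tree represents a rectangle whose $x$-extent is one of the root's $x$-descendant intervals and whose $y$-extent is one of the root's $y$-descendant intervals, since its $x$-extent is carved out purely by $x$-subdivisions and its $y$-extent purely by $y$-subdivisions; and that, because $x$-parent and $y$-parent moves commute, every ancestor of the node for $I_i\times J_j$ is the node for some $A\times B$ with $A$ an $x$-ancestor of $I_i$ and $B$ a $y$-ancestor of $J_j$ (alternatively this is immediate from Theorem~\ref{thm:rectinside}, since the ancestor's rectangle must contain $I_i\times J_j$). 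Hence the union over all $pq$ canonical subregions of $R$ of all of their ancestors is contained in the set of those nodes whose $x$-extent is an $x$-ancestor of some $I_i$ and whose $y$-extent is a $y$-ancestor of some $J_j$; write $\mathcal{A}_x$ and $\mathcal{A}_y$ for the families of $x$-intervals and of $y$-intervals occurring here.

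It then remains to bound $|\mathcal{A}_x|$ and $|\mathcal{A}_y|$, which is the one genuinely nontrivial step and is purely one-dimensional. If $A$ is a proper $x$-ancestor of a canonical piece $I_i$, then $A$ is not contained in the $x$-range of $R$ --- otherwise the $x$-parent of $I_i$ would also be contained in it, contradicting the maximality that makes $I_i$ canonical --- while $A\supseteq I_i$ still meets that $x$-range; since one-dimensional segment-tree intervals never partially overlap (Property~\ref{prop:nooverlap}), $A$ must straddle one of the two $x$-endpoints of $R$, and by Property~\ref{prop:height} there are only $O(\log n)$ nodes straddling a fixed coordinate, so $|\mathcal{A}_x|=O(\log n)$; the same argument gives $|\mathcal{A}_y|=O(\log n)$. (Equivalently, $\mathcal{A}_x$ together with the $I_i$ is precisely the set of nodes visited by the insertion procedure on the $x$-range of $R$, which is $O(\log n)$.) Since distinct nodes of a unified segment tree represent distinct rectangles, the total number of ancestors is then at most $|\mathcal{A}_x|\cdot|\mathcal{A}_y|=O(\log^2 n)$. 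The main obstacle is therefore conceptual: one must resist attacking the problem subregion by subregion and instead see that the two-dimensional ancestor set factors through an $O(\log n)$-size family of $x$-intervals and an $O(\log n)$-size family of $y$-intervals, the only supporting lemma being the one-dimensional fact that the canonical pieces of an interval have $O(\log n)$ ancestors in all.
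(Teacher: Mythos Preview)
Your argument is correct and takes a genuinely different route from the paper's. The paper argues per \emph{shape} (per unit of the diamond): it shows that among the ancestors of the canonical subregions there can be at most $16$ of any fixed shape, via a two-dimensional pigeonhole extension of the one-dimensional ``at most $4$ per depth'' argument behind Property~\ref{prop:maxancestors}; since there are $O(\log^2 n)$ shapes in all, the bound follows. You instead factor the problem: any ancestor of the node for $I_i\times J_j$ is necessarily of the form $A\times B$ with $A$ an $x$-ancestor of some $I_i$ and $B$ a $y$-ancestor of some $J_j$, so the total is bounded by $|\mathcal{A}_x|\cdot|\mathcal{A}_y|$, and each factor is $O(\log n)$ by invoking the one-dimensional Property~\ref{prop:maxancestors} as a black box. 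Your route is cleaner and more modular, and the generalization to $d$ dimensions is immediate (a product of $d$ factors of size $O(\log n)$); the paper's route yields an explicit per-shape constant and mirrors the style of the one-dimensional proof rather than reusing it.
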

\begin{proof}
The limit on the number of the ancestors of the 
canonical representation exists because there can be no more 
than 16 ancestors of a given shape.  Consider that there are 17 
ancestors of a particular shape in the tree.  
By Property \ref{prop:nooverlap}, these shapes cannot overlap
in their $x$-coordinates, or their $y$-coordinates, unless the
coordinates are the same.  Since there are 17 distinct sets
of $x$ and $y$-coordinates, there must be at least 5 distinct 
pairs of $x$-coordinates, or 5 distinct pairs of $y$-coordinates.
Assume, without loss of generality that there are 5 distinct pairs of
$x$-coordinates.  Consider the node representing the middle of 
these 5 pairs of coordinates.  The $x$-parent of the node representing 
this rectangle must be located entirely within the original rectangle.  
Therefore there is no reason to include the middle rectangle, or any of 
its descendants in the canonical representation.
\end{proof}

\begin{corollary}
Insertion of a rectangle into a two dimensional segment tree
requires $O(\log ^2 n)$ time.
\end{corollary}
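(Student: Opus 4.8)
The plan is to decompose an insertion into three phases and bound each by $O(\log^2 n)$: (i) locating the canonical subrectangles of the rectangle being inserted, (ii) creating whatever structural nodes — the missing ancestors demanded by step~4 of the construction — are needed to accommodate these new leaves, and (iii) appending the rectangle to the segment lists of its canonical nodes. Since there are a constant number of phases, the corollary follows immediately once each is bounded.

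For phase (i) I would mimic the one-dimensional insertion procedure of Property~\ref{prop:maxcanonical} (see Appendix) applied twice over. Descending the $x$-descendants of the root decomposes the rectangle's $x$-extent into $O(\log n)$ canonical intervals while visiting $O(\log n)$ nodes; at each such node we descend its $y$-descendants, decomposing the $y$-extent into $O(\log n)$ canonical intervals and visiting $O(\log n)$ further nodes. Hence $O(\log^2 n)$ nodes are touched and $O(\log^2 n)$ canonical subrectangles are produced, consistent with the earlier bound on the size of the canonical representation. Phase (iii) then appends the rectangle to the segment list of each of these $O(\log^2 n)$ nodes, again $O(\log^2 n)$ work.

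For phase (ii), step~4 of the definition requires that every node missing an available ancestor have that ancestor added. By Theorem~\ref{thm:maxancestors}, the canonical subregions of a single rectangle together possess at most $O(\log^2 n)$ ancestors, so at most $O(\log^2 n)$ new nodes can ever be created during one insertion; creating a node and wiring up its constantly many $x$- and $y$-parent and child pointers is $O(1)$ work, so the phase costs $O(\log^2 n)$.

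The main obstacle is phase (ii): because a node of the unified structure can have two parents, a naive recursive ``add all ancestors'' walk could reach the same node along many distinct paths and thereby blow up. To keep the cost governed by the $O(\log^2 n)$ bound of Theorem~\ref{thm:maxancestors} rather than by the number of root-to-node paths, I would traverse the ancestor DAG with visited-marking (or, equivalently, process ancestors unit by unit up the diamond, using the fact that a node has at most one ancestor per unit) so that each ancestor is created and visited only once.
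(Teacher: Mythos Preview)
Your proposal is correct and follows essentially the same route as the paper: the corollary is stated without its own proof, immediately after Theorem~\ref{thm:maxancestors}, with the implicit reasoning (mirroring the one-dimensional Property~4 in the Appendix) that insertion touches only the canonical nodes and their ancestors, of which there are $O(\log^2 n)$. Your decomposition into phases and your care about avoiding repeated visits in the ancestor DAG are more explicit than anything the paper supplies, but the core dependence on Theorem~\ref{thm:maxancestors} is the same.
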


\begin{corollary}
A two dimensional unified segment tree requires $O(n\log^2 n)$ space
to store $n$ rectangles.
\end{corollary}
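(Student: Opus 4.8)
The plan is to bootstrap off the two results immediately preceding this corollary. The theorem on canonical representations tells us that inserting any one of the $n$ rectangles produces $O(\log^2 n)$ canonical subrectangles, hence touches $O(\log^2 n)$ nodes that must explicitly carry that rectangle in their segment list. Theorem~\ref{thm:maxancestors} then tells us that the closure of this set under the ancestor relation --- that is, all the extra nodes that step~4 of the construction forces us to add on behalf of that rectangle --- still has size $O(\log^2 n)$, since no node region admits more than a constant number (at most $16$) of ancestors of a fixed shape. Multiplying by $n$ should give the result.

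First I would fix a single rectangle $R$ and let $N(R)$ denote the set of all nodes that exist in the unified segment tree on account of $R$: the $O(\log^2 n)$ canonical nodes of $R$, together with every one of their $x$- and $y$-ancestors. By the two results cited above, $|N(R)| = O(\log^2 n)$. I would note here that the ancestor set of a node is determined purely by the node's representative region --- each node has a single $x$-parent and a single $y$-parent, obtained geometrically --- so $N(R)$ does not grow because of interactions with other rectangles; step~4 of the construction merely guarantees that the members of $N(R)$ are present.

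Next I would sum over all rectangles. Every node of the data structure lies in $\bigcup_R N(R)$, because the merging performed in steps~2 and~3 can only identify existing nodes, never create new ones; hence the number of nodes is at most $\sum_R |N(R)| = O(n\log^2 n)$. Each node stores only a constant number of structural pointers --- an $x$-parent, a $y$-parent, two $x$-children, and two $y$-children --- contributing $O(n\log^2 n)$ space. For the segment lists, the total length summed over all nodes equals the number of (rectangle, canonical-node) incidences, which is again $\sum_R O(\log^2 n) = O(n\log^2 n)$. Adding the two contributions yields the claimed bound.

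The only point needing a little care --- and the step I would flag as the main, though minor, obstacle --- is justifying that node sharing between rectangles and the merging done during construction can only decrease the node count relative to the crude sum $\sum_R |N(R)|$, so that this sum is a legitimate upper bound. Once that is observed the estimate is immediate, and since the statement is an upper bound no matching lower bound is required.
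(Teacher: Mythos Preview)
Your argument is correct and matches the paper's intent: the corollary is stated without proof, as an immediate consequence of the preceding results, and the implicit reasoning is exactly the per-rectangle $O(\log^2 n)$ bound from Theorem~\ref{thm:maxancestors} summed over the $n$ rectangles. Your additional care in separating the node count from the total segment-list length, and in noting that the merging in steps~2--3 can only reduce the node count, simply makes explicit what the paper leaves to the reader.
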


\begin{theorem}
A point query of a two dimensional unified segment tree returns
the list of enclosing rectangles in $O(\log^2 n + k)$ time
where $k$ is the number of rectangles reported.
\end{theorem}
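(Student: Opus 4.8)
The plan is to give an explicit query procedure that walks down the structure, argue that it returns exactly the enclosing rectangles, and then bound the number of nodes it touches by $O(\log^2 n)$ using the ``one node per unit'' phenomenon established earlier. First I would describe the procedure: given a query point $p = (p_x, p_y)$, start at the root, whose region is the bounding box of all inserted rectangles; at each visited node, report every rectangle in that node's stored list; and then recurse into the unique $x$-child whose $x$-interval contains $p_x$ and into the unique $y$-child whose $y$-interval contains $p_y$, ceasing to descend in a given direction once the current node is extremal in that direction. Ties on a splitting coordinate are broken by a fixed convention, exactly as in the one-dimensional segment tree. Since the unified structure is a DAG and not a tree, a node may be reachable along several such descending paths, so I would mark each node when it is first processed --- equivalently, fill a two-dimensional table indexed by $x$-depth and $y$-depth --- so that no node is processed twice.

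Next I would establish correctness by showing that the visited nodes are precisely the nodes whose region contains $p$. Every visited node's region contains $p$, because the root's does and each recursion step moves to a child whose region still contains $p$. For the converse, if a node $v$ has a region containing $p$, then so do the regions of its $x$-parent and $y$-parent, which are supersets of $v$'s region; iterating, every ancestor of $v$ contains $p$, and along any chain of ancestors from the root down to $v$ each step is exactly the $x$- or $y$-child the procedure selects, so $v$ is eventually visited. I would then invoke the earlier result that a rectangle's canonical decomposition is the same whether the tree is read as an $xy$- or a $yx$-segment tree: a rectangle $R$ with $p \in R$ is stored exactly at the nodes of this decomposition, whose regions are disjoint and union to $R$, so exactly one of them contains $p$, and hence $R$ is reported once and only once; conversely, a rectangle stored at a visited node $v$ contains $v$'s region and therefore contains $p$, so nothing spurious is reported.

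Finally, for the running time the key step is to bound the number of visited nodes, and here I would reuse the argument behind the theorem that a node has at most one ancestor per unit of the diamond: any two distinct nodes whose regions both contain $p$ must lie in different units, since two nodes in the same unit represent rectangles of the same shape, those rectangles share the point $p$, and by Property~\ref{prop:nooverlap} they cannot partially overlap and so must coincide. As the $x$-depth and the $y$-depth of any node are each $O(\log n)$ after $n$ insertions, the diamond has only $O(\log^2 n)$ units, so the procedure visits $O(\log^2 n)$ nodes; each contributes $O(1)$ bookkeeping plus time proportional to the length of its reported list, and these lengths sum to $k$, yielding $O(\log^2 n + k)$. The step I expect to need the most care is precisely this interface between the DAG structure and the count: one must be sure that the many redundant descending paths do not inflate the running time --- handled by the visited marks --- and that no rectangle is reported twice --- handled by the disjointness of a rectangle's canonical pieces --- so that the $k$ in the bound is genuinely the number of distinct rectangles returned.
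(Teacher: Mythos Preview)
Your proposal is correct and follows essentially the same approach as the paper: visit exactly the nodes whose representative region contains the query point --- the paper phrases this as the $O(\log^2 n)$ ancestors of the deepest node, while you reach the same set by a marked top-down descent --- and report the rectangles stored at each. Your write-up is considerably more detailed than the paper's three-sentence proof, in particular in handling the DAG traversal and arguing that each enclosing rectangle is reported exactly once, but the underlying argument is the same.
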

\begin{proof}
A point is represented in a unified segment tree at the deepest node.
All enclosing rectangles are represented by $O(\log^2 n)$ ancestors of this 
node.  Thus it is sufficient to report all rectangles stored in all ancestors.
\end{proof}

\section{Higher Dimensions}

Most every aspect of the two dimensional unified segment tree can
be generalized into higher dimensions in a straightforward way.
Here we mention a few of the differences.

\begin{itemize}
\item{Nodes may have $d$ parents and $2d$ children.}
\item{A node may have up to $O(\log^d n)$ ancestors.}
\item{Insertion requires $O(\log^d n)$ time.}
\item{Query requires $O(\log^d n + k)$ time.}
\item{The data structure occupies $O(n\log^d n)$ space.}
\end{itemize}

We note that the space and the running time of the operations could be 
sped up by a logarithmic factor, through the use of fractional cascading.
However, our objective here is to demonstrate that this data structure is 
a more natural representation of the data, rather than to achieve the
maximum speedup.  Therefore, we do not investigate the use of fractional 
cascading here.

\section{The Rectangle Intersection Problem}
The rectangle intersection problem is a classic problem dating
back to the early days of computational geometry. 
The problem has been studied by many authors, and numerous
variations have been inspired \cite{KapeliosPPST95,KaplanMT03}.

Although multiple definitions have been given for the name
``rectangle intersection problem'', we use the following definition.  
Store a set of $n$ axis-parallel rectangles in such a
way, that a rectangle query will efficiently report a list
of all rectangles from the set which intersect the query
rectangle.

Edelsbrunner and Maurer gave one of the first general purpose
algorithms solving this problem in 1981 \cite{EdelsbrunnerM81}.
Their method involves storing the rectangles in a combination of
segment trees and range trees, and these are queried to detect 
four kinds of intersections between rectangles.
In higher dimensions, they sweep across the problem, solving a 
lower dimensional problems during the sweep, and thereby giving 
an overall solution that is recursive by dimension.

Edelsbrunner demonstrated two further methods to solve the problem
in 1983 \cite{Edelsbrunner83I}.  The first of these involves
Storing the $d$-dimensional rectangle in a $2d$-dimensional range
tree, and performing an appropriate range query.  For the
second, a data structure called the ``rectangle tree'' is developed, 
and this is queried similarly to the range tree.

Here we show yet another way to solve this problem using an
augmented version of our unified segment tree.  
Our method does not claim a speedup over the previous methods.  
However, we feel the unified segment tree uses a more natural 
representation of the data than the range tree or rectangle tree.
Additionally, given the machinery that we have already developed
in this paper, the algorithm is quite straightforward.

\subsection{Augmenting the Unified Segment Tree}
For the purpose of solving the rectangle intersection problem, we 
augment the unified segment tree with the following additional
data:
\begin{itemize}
\item{A list of rectangles stored in all descendants of the node.}
\item{A list of rectangles stored in all $x$-descendants of the node.}
\item{A list of rectangles stored in all $y$-descendants of the node.}
\end{itemize}

Thus a node of a two dimensional augmented segment tree contains 
the following information.
\begin{verbatim}
struct NODE {
    struct NODE * xparent, yparent;
    struct NODE * leftxchild, rightxchild;
    struct NODE * leftychild, rightychild;
    Segment storedHere[];
    Segment storedInDescendants[];
    Segment storedInXDescendants[];
    Segment storedInYDescendants[];
}
\end{verbatim}

When a segment is inserted, its identifier must be inserted
into all canonical nodes, and all ancestors of the canonical nodes, 
in the appropriate lists.  
In two dimensions, this does not affect the asymptotic running 
time of the insert operation.  However, in $d$ dimensions, there can
exist $2^d$ separate lists, so an alternate method of storage may be
desirable if $d$ is large.

\subsection{Rectangle Query Algorithm}
A rectangle query operation returns a list of all rectangles which 
intersect the given query rectangle.  
Our algorithm for rectangle query first divides the query rectangle 
into its canonical regions.  It then performs a query on each rectangle 
individually, reporting the union of the rectangles found.

  Note that the same rectangle might be found in multiple places, so care 
must be taken to avoid reporting duplicates, if that is undesirable.
If duplicates are reported it may also adversely affect the running time.

Recall from Theorem \ref{thm:rectint} that rectangles can only 
intersect if one is completely inside the other, or if they
completely cross over each other, one in the $x$ direction, and the
other in the $y$ direction.  Therefore, it is sufficient to report
the rectangles described in Theorems \ref{thm:rectinside} and 
\ref{thm:rectcross}.  

This gives us the following very straightforward algorithm:
\begin{enumerate}
\item{Report all rectangles stored in ancestors of the node.}
\item{Report all rectangles stored in the descendant list of the node.}
\item{Report all rectangles stored in the $x$-descendant list of 
a $y$-ancestor of the node.}
\item{Report all rectangles stored in the $y$-descendant list of 
an $x$-ancestor of the node.}
\end{enumerate}
This information is available in the ancestors of the canonical
nodes of the query rectangles.  So only $O(\log^2 n)$ nodes
need to be accessed.  Again care must be taken to avoid
reporting duplicates.

\subsubsection*{Acknowledgments.} 
The author is grateful to Stefan Langerman and John Iacono
for their helpful discussions.

\bibliographystyle{abbrv}

\bibliography{bibliography}

\section*{Appendix: The Segment Tree}

We include a short description of the segment tree data structure, 
originally described by Bentley in 1977 \cite{Bentley77}.

A segment tree is a a balanced binary tree which stores
a set of line segments.  The data structure supports point
queries, which return the list of all segments containing
the query point.

   Each node of the tree represents a contiguous section of the line, 
may be bounded or unbounded, and may be open or closed.
The root of the tree represents the entire available line.
The two children of any node represent two mutually exclusive
subsections of the line, and the union of these two
subsections is equivalent to the section represented by the parent.

The points where the line may be divided are normally drawn from
the endpoints of the segments stored.  Thus, the entire set of segments
is assumed to be known before the tree is built, although methods have 
been developed to make the segment tree dynamic 
\cite{vanKreveldO89,vanKreveldO93}.

Before a segment is inserted, it must first be divided into the unique 
set of mutually exclusive subsegments, whose union is equivalent to the 
original segment, and which have representative nodes in the tree.
This set is called the ``canonical representation'' of the segment.
An identifier for the original segment is then stored at every node
which represents one of the subsegments.

Segment trees have many well known properties. 
Here we show a proof of the some properties which we 
will use later in the paper.
These apply to a segment tree which is designed to hold $n$ segments.

\begin{property}
The height of a segment tree is at most $O(\log n)$.
\label{prop:height}
\end{property}
\begin{proof}
The segment tree is well-balanced, and therefore its height can be optimal.
\end{proof}

\begin{property}
The canonical representation of a segment has, at most, $O(\log n)$ segments.
\label{prop:maxcanonical}
\end{property}
\begin{proof}
The limit on the number of segments in the canonical representation 
exists because there can be no more than 2 segments at a given depth
of the segment tree.
Consider that there are 3 segments at a particular depth
in the tree.  The parent of the middle segment must be located
entirely within the original segment.  Therefore there is no reason
to include the middle segment in the canonical representation.
\end{proof}

\begin{property}
The ancestors of the canonical representation of a segment
are comprised of, at most, $O(\log n)$ nodes.
\label{prop:maxancestors}
\end{property}
\begin{proof}
The limit on the number of the ancestors of the 
canonical representation exists because there can be no more 
than 4 ancestors at a given depth.  Consider that there are 5 
ancestors at a particular depth in the tree.  
The parent of the middle segment must be located
entirely within the original segment.  Therefore there is no reason
to include the middle segment, or any of its descendants in the 
canonical representation.
\end{proof}

\begin{property}
$O(\log n)$ time is required for insertion and query in a segment tree.
\end{property}
\begin{proof}
The only nodes processed during an insertion 
or query are the ancestors of the canonical set.
So the running time follows directly from Property \ref{prop:maxancestors}.
\end{proof}

\begin{property}
The line segments represented by any two nodes in a segment tree are
either mutually exclusive, or one is contained in the other.
\label{prop:nooverlap}
\end{property}

\begin{proof}
Consider any two nodes in a segment tree.  Either one of these nodes
is an descendant of the other, or neither is a descendant of the other.

If one node is a descendant of the other, then the line segment represented 
by the descendant node is defined by repeatedly subdividing the line segment 
represented by the ancestor.  Thus the descendant line segment is contained 
in the ancestor line segment.

If there is no descendant relationship between the two nodes, then they must
have a deepest common ancestor, which is neither of the two nodes.  
One of the two nodes must be in the left subtree of the deepest common
ancestor, and one must be in the right subtree.  
The line segments represented by the left and right children of this node 
are be mutually exclusive, so the line segments represented
by one node from each subtree must be mutually exclusive.
\end{proof}

In particular, we note that if two nodes, $N_a$ and $N_b$, in the segment tree
represent segments $[min_a, max_a]$ and $[min_b, max_b]$, then it
cannot be the case that $min_a < min_b < max_a < max_b$.

\end{document}